\DeclareMathOperator*{\argmax}{arg\,max}
\newtheorem{theorem}{Theorem}[section]
\newtheorem{lemma}[theorem]{Lemma}
\ifpdf \usepackage[pdftex]{graphicx} \pdfcompresslevel=9
\else \usepackage[dvips]{graphicx} \fi
\begin{document}

\title[Towards Robust Direction Invariance]{Towards Robust Direction Invariance in Character Animation}



\author[L. Ma]{
  \parbox{\textwidth}{\centering 
  Li-Ke Ma\thanks{\texttt{milkpku@gmail.com}}$^{1,2}$  ~
  Zeshi Yang\thanks{\texttt{\{zeshiy,kkyin\}@sfu.ca}}$^1$ ~
  Baining Guo\thanks{\texttt{bainguo@microsoft.com}}$^3$ ~
  KangKang Yin$^1$ 
  }
  \\
  \parbox{\textwidth}{\centering
  $^1$Simon Fraser University ~
  $^2$Tsinghua University ~ 
  $^3$Microsoft Research Asia}
}

\definecolor{myOrange}{rgb}{1,0.3,0}
\definecolor{myGreen}{rgb}{0,0.8,0.5}
\definecolor{kkBlue}{rgb}{0,0.0,0.9}

\newcommand{\like}[1]{\textcolor{myOrange}{#1}}
\newcommand{\zeshi}[1]{\textcolor{myGreen}{#1}}
\newcommand{\kk}[1]{\textcolor{kkBlue}{#1}}
\maketitle


\begin{abstract}
In character animation, direction invariance is a desirable property. That is, a pose facing north and the same pose facing south are considered the same; a character that can walk to the north is expected to be able to walk to the south in a similar style. To achieve such direction invariance, the current practice is to remove the facing direction's rotation around the vertical axis before further processing. Such a scheme, however, is not robust for rotational behaviors in the sagittal plane. In search of a smooth scheme to achieve direction invariance, we prove that in general a singularity free scheme does not exist. We further connect the problem with the hairy ball theorem, which is better-known to the graphics community. Due to the nonexistence of a singularity free scheme, a general solution does not exist and we propose a remedy by using a properly-chosen motion direction that can avoid singularities for specific motions at hand. We perform comparative studies using two deep-learning based methods, one builds kinematic motion representations and the other learns physics-based controls. The results show that with our robust direction invariant features, both methods can achieve better results in terms of learning speed and/or final quality. We hope this paper can not only boost performance for character animation methods, but also help related communities currently not fully aware of the direction invariance problem to achieve more robust results. 

\begin{CCSXML}
<ccs2012>
<concept>
<concept_id>10010147.10010371.10010352</concept_id>
<concept_desc>Computing methodologies~Animation</concept_desc>
<concept_significance>500</concept_significance>
</concept>
<concept>
<concept_id>10002950.10003741.10003742.10003744</concept_id>
<concept_desc>Mathematics of computing~Algebraic topology</concept_desc>
<concept_significance>300</concept_significance>
</concept>
</ccs2012>
\end{CCSXML}

\ccsdesc[500]{Computing methodologies~Animation}
\ccsdesc[300]{Mathematics of computing~Algebraic topology}

\printccsdesc   
\end{abstract}  
\section{Introduction}

In computer animation, digital characters are usually parameterized by an articulated rigid body system with six Degrees of Freedom (DoFs) at the root, and a tree of internal rotational joints. It is relatively well-known, however, that the six DoFs at the root, although necessary to fully specify a 3D pose for the character, are redundant in terms of specifying the motion tasks. For example, a walk to the north and a walk in the same style to the south are considered the same walk even though the directions of motion are different. It is thus a common practice to remove the root rotation about the gravitational direction, usually the $Y$-axis in computer graphics, before further processing such as pose similarity computation in motion retrieval or control learning in physics-based animation.

To remove the redundant $Y$-rotational DoF at the root, the current practice is to define a ``facing direction'' by the $X$ axis of a local frame defined at the root, i.e., the sagittal axis, as shown in Figure~\ref{fig:RFD} left. This facing direction usually corresponds to where the character's belly button points towards. By aligning the facing directions of all frames to the same plane, poses and motion features such as positions and velocities become $Y$-rotation free and can then be compared or used in a more consistent way~\cite{Kovar04,Peng:2018:DeepMimic}.

\begin{figure}[t]
    \centering
    \includegraphics[width=\linewidth]{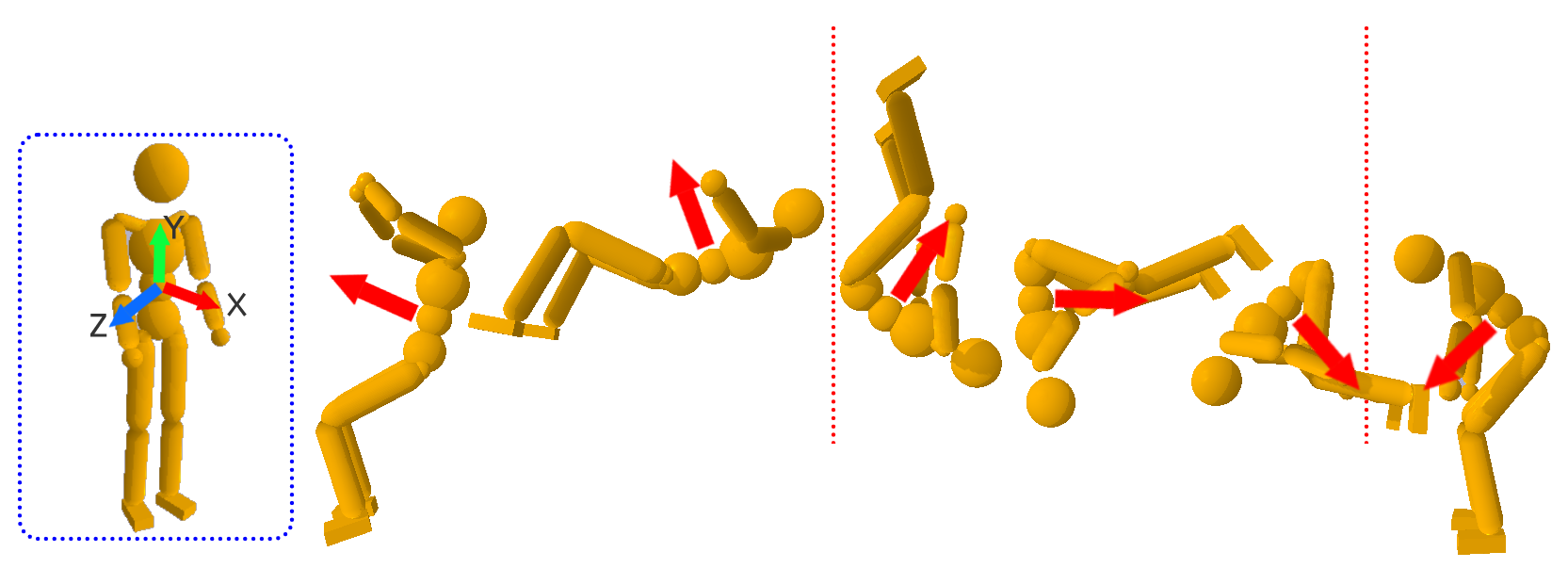}
    \caption{Left: root frame of the character model. The $X$ axis corresponds to the sagittal axis; the $Z$ axis is the lateral axis; and the $Y$ axis is the vertical axis. In human anatomy, the root $XY$ plane is referred to as the sagittal plane; the root $XZ$ plane is the transverse plane; and the root $YZ$ plane is the frontal plane. Right: Root frame facing direction ($X$ axis) during the course of a backflip.}
    \vspace{-10pt}
    \label{fig:RFD}
\end{figure}

The above defined facing direction works well for locomotion tasks, which have been the focus of most character animation research so far. There are cases, however, that the facing direction as defined above is problematic. Figure~\ref{fig:RFD} right shows a backflip motion during which the root $X$ axis aligns with the global $Y$ axis twice, around the moments denoted by the vertical dashed lines. The computed $Y$ rotations of the root frame will thus encounter singularities around these moments, as will be described in detail in Section~\ref{sec:singularity}. The motion features after being transformed by these derived $Y$-rotations will change wildly around these moments as well. Such ill-behaved features can potentially degrade the performance of various learning algorithms to certain degrees, as will be shown in Section~\ref{sec:results}. We therefore wish to devise a robust scheme that removes the direction of motion in a singularity free manner.

We formulate this problem using mathematical tools from algebraic topology in Section~\ref{sec:singularity}. Then we prove that such a desired singularity-free smooth mapping scheme actually does not exist in general. We therefore propose a remedy to choose a well-behaved ``motion direction'' to replace the ``facing direction'' for specific motions at hand. The chosen motion direction should stay away from the global $Y$ axis during the full course of the motion. For example, for the backflip shown in Figure~\ref{fig:RFD}, the root $Z$ axis qualifies as such a motion direction. We then formulate the calculation of direction-invariant features based on a given motion direction, either manually chosen or automatically computed, in Section~\ref{sec:derivation}. In Section~\ref{sec:results}, we validate that a well-chosen motion direction and its derived features are sufficient to avoid singularities in practice. We also perform comparative studies using two state-of-the-art animation methods on direction-invariant features calculated from the motion direction (DIM), direction-invariant features derived from the conventional facing direction (DIF), and global features (GF). Our results show that DIF features help achieve better performance in terms of synthesized motion quality and/or learning speed.

With recent advances and growths in machine learning and computer vision, we have seen publications related to character animation in non-graphics communities~\cite{Zhao17,zhou2018autoconditioned}. These communities seem to handle motion directions in many different ways. We show that by using our robust direction-invariant features, some of these algorithms can greatly boost their performance without any change to their training or learning components. We hope that our solution and comparative studies can help advocate robust direction-invariant features to wider communities to facilitate faster scientific advances in more related fields.

\section{Related Works}

\subsection{Character Animation}

Kinematic character animation has a long history of comparing poses~\cite{Kovar02} and modelling motions~\cite{Holden:2017:PFNN,Aristidou18} in a direction invariant manner. It is known that ``a motion is defined only up to a rigid 2D coordinate transformation. That is, the motion is fundamentally unchanged if we translate it along the floor plane or rotate it about the vertical axis''~\cite{Kovar02}. The rotation about the vertical axis can be calculated from sampled point clouds on the digital character~\cite{Kovar02}, or simply from the facing direction extracted from the root orientations~\cite{Holden16,Holden:2017:PFNN}. In this work we also use root orientations, but our framework can be applied to other schemes as well.

Various physics-based animation methods also need to handle the motion direction. In hand-crafted controllers such as \cite{Hodgins95,Yin07,Coros10}, features are transformed into the sagittal plane and the lateral plane before applying controls in each plane respectively, so that the simulated character can walk in any direction. In~\cite{Kwon17}, when the reference motion involves a high speed rotation during the flight phase, the forward facing direction is obtained by linearly interpolating the initial and final forward facing directions to achieve more robust reference coordinate systems. Trajectory optimization methods usually optimize for a specific 3D trajectory with no direction invariance~\cite{witkin88,Mordatch12,AlBorno13,levine2014learning}. Recently deep learning and deep reinforcement learning (DRL) have been applied successfully on learning physics-based controls for character animation~\cite{Peng:2018:DeepMimic,Liu18,Yu:2018:LSLL,Lee19}. Features are usually converted to be facing-direction invariant before being passed to the deep neural networks (DNNs). We chose DeepMimic~\cite{Peng:2018:DeepMimic} as one of the two methods for our comparative study in Section~\ref{sec:kinresults}, as it produces state-of-the-art results in terms of motion variety and quality, and its source code was released by the original authors and readily available.

\subsection{Related Fields}

The computer vision and machine learning communities have been investigating motion prediction and synthesis for some time~\cite{wang2007gaussian,taylor2007modeling,fragkiadaki2015recurrent,jain2016structural,martinez2017human,gui2018adversarial,zhou2018autoconditioned}. Among these publications, some avoided using root orientations at all~\cite{gui2018adversarial,zhou2018autoconditioned}; some used incremental changes around the gravitational vertical,
and absolute pitch and roll relative to the facing direction~\cite{taylor2007modeling,martinez2017human}; some just used root orientations as is~\cite{wang2007gaussian}; and some are not clear from the writing how they dealt with the root orientations~\cite{fragkiadaki2015recurrent,jain2016structural}. We chose acRNN~\cite{zhou2018autoconditioned} as one of the two methods for our comparative study in Section~\ref{sec:kinresults}, as it generates the best results in terms of long-term motion quality and complexity, and its source code was released by the original authors and readily available.

Human action recognition is also a classic problem in computer vision~\cite{lehrmann2014efficient,Simonyan:2014,Zhao17}. Most work directly deals with pixel features. A few choose to first reconstruct skeletal poses from video, and then use features in the global frame for further learning~\cite{Zhao17}. We think such recognition tasks would also benefit from using our robust direction-invariant features as well.


\section{Nonexistence of a Singularity-free Mapping Scheme}
\label{sec:singularity}

We use $g$ to denote an orientation in $SO(3)$, and $g_1 \circ g_2$ to denote composition of two orientations $g_1$ and $g_2$. Our derivations and proofs are independent of the specific rotation parameterization method. For example, $g$ can be parameterized by quaternions and rotation composition can be achieved through quaternion multiplication. We use $g(m)$ to denote a vector $m$ rotated by $g$. 

\subsection{Problem Formulation}
We consider two orientations in $SO(3)$ equivalent if they are connected by a rotation with respect to the gravity direction, namely a rotation $g_y$ about $Y$. We denote the set of all $g_y$ as $R_Y$, which is isomorphic to $SO(2)$. We also denote the orbit of an orientation $g \in SO(3)$ under rotations in $R_Y$ as $\Omega(g)$:
\begin{equation}
  \Omega(g) = \{g_y \circ g | g_y \in R_Y\}.
\end{equation}
Rotations in the same orbit are considered equivalent. Denote the set of orbits as $H$. 
Then $H$ is homeomorphic to the 2-sphere $S^2$ with a natural homogeneous space structure as the quotient of two Lie groups $SO(3)$ and $SO(2)$: 
\begin{equation}
  H = \{\Omega(g) | g \in SO(3)\} \cong SO(3)/SO(2) \cong S^2.
\end{equation}

Calculating a $Y$-rotation invariant frame for an orientation $g$ requires projecting  $\Omega(g)$ to a representative element $g_0$ of the orbit, so that all orientations in $\Omega(g)$ are aligned with $g_0$. $g_0$ is chosen manually, such as the one described in Section~\ref{sec:derivation}. Formally speaking, we wish to find a mapping scheme $\mathcal{M}$ from all orbits in $H$ to representative elements in $SO(3)$ subject to certain constraints:
\begin{equation}
  \begin{split}
    &\mathcal{M} : H \mapsto SO(3), \\
    &s.t.~\Omega\left(\mathcal{M}(h)\right) = h, ~ \forall h \in H.
  \end{split}
  \label{eq:app:setup}
\end{equation}

The question is whether there is a smooth singularity-free mapping $f$ that satisfies the above property. We provide two proofs that such singularity-free mapping does not exist. One proof follows the Fiber Bundle theory~\cite{Hatcher02}. The other connects to the Hairy Ball Theorem~\cite{Eisenberg:1979:HBT}, which is more familiar to the graphics community~\cite{Fisher07}.

\subsection{Proof 1: by Fiber Bundle Theory}
Our orbit operator $\Omega$ defines a fiber bundle $\left(SO(3), S^2, \Omega, SO(2)\right)$. In the quest of a smooth mapping $\mathcal{M}$, we wish to find a globally defined cross section of this fiber bundle. However, \begin{equation}
    SO(3) \not \cong SO(2) \times S^2,
\end{equation}
since SO(3) is homeomorphic to the 3-dimensional projective space, which is not homeomorphic to $SO(2) \times S^2$. Thus there does not exist such a global cross section~\cite{Hatcher02}.

\subsection{Proof 2: by Hairy Ball Theorem}
\label{sec:hairy_ball}

Imagine a unit tangent vector $t=(1,0,0)$ attached to the north pole $p=(0, 1, 0)$ of a unit sphere. Now we define a mapping $\mathcal{F}$ that maps an element $g\in SO(3)$ to a unit tangent vector $g^{-1}(t)$ located at $g^{-1}(p)$ on the sphere, by rotating this sphere by $g^{-1}$.

\vspace{5pt}
\begin{lemma}
The mapping $\mathcal{F}$ between $SO(3)$ and unit tangent vectors on the sphere is bijective.
\label{lemma:mapping:so3}
\end{lemma}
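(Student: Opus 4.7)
The plan is to recognize the codomain of $\mathcal{F}$ as the unit tangent bundle $UT(S^2)$ of the $2$-sphere, and then verify the two halves of bijectivity via an orthonormal-frame argument. Both $SO(3)$ and $UT(S^2)$ are $3$-dimensional, which gives a sanity check that a bijection is at least dimensionally possible.

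First I would note that since rotations preserve inner products and $t \perp p$ as vectors in $\mathbb{R}^3$, the vector $g^{-1}(t)$ is orthogonal to $g^{-1}(p)$, so it lies in the tangent plane $T_{g^{-1}(p)}S^2$, and it is a unit vector because $g^{-1}$ is an isometry. Hence $\mathcal{F}$ really does land in $UT(S^2)$.

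Next, for injectivity, suppose $\mathcal{F}(g_1) = \mathcal{F}(g_2)$. Then $g_1^{-1}(p) = g_2^{-1}(p)$ and $g_1^{-1}(t) = g_2^{-1}(t)$, so the composition $g_2 \circ g_1^{-1}$ fixes both $p$ and $t$. Since $p$ and $t$ are orthogonal unit vectors, $p \times t$ is a third orthonormal vector and is preserved as well (rotations preserve the cross product). Thus $g_2 \circ g_1^{-1}$ fixes an orthonormal basis of $\mathbb{R}^3$ and must be the identity, giving $g_1 = g_2$.

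For surjectivity, given any pair $(q, v) \in UT(S^2)$, I would build the right-handed orthonormal frames $(p, t, p \times t)$ and $(q, v, q \times v)$, observe that there is a unique rotation $h \in SO(3)$ sending the first frame to the second, and set $g = h^{-1}$. Then $g^{-1}(p) = h(p) = q$ and $g^{-1}(t) = h(t) = v$, so $\mathcal{F}(g) = (v, q)$. The main subtlety to flag is making sure the rotation is well-defined and orientation-preserving; this follows from both frames being right-handed orthonormal, which is the one place a careless argument could slip. Everything else is routine linear algebra, so I expect no real obstacle beyond cleanly stating the identification with $UT(S^2)$.
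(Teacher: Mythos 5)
Your proof is correct and takes the same direct route as the paper (show injectivity and surjectivity of $\mathcal{F}$), but your orthonormal-frame argument fills in details that the paper only asserts: the paper's injectivity claim is essentially restated well-definedness, whereas you actually show that a rotation fixing $p$, $t$, and hence $p\times t$ must be the identity. Your treatment is the more careful version of the same idea, and the explicit observation that $\mathcal{F}$ lands in $UT(S^2)$ is a worthwhile addition.
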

\begin{proof}
(a) Each element in $SO(3)$ rotates $t$ to a unique unit tangent vector on the sphere, so the mapping is injective. (b) For any unit tangent vector $s$ on the sphere, there is always a $g^{-1}$ that can rotate $t$ to $s$, so the mapping is surjective. Thus the mapping $\mathcal{F}$ is bijective.
\end{proof}

\vspace{5pt}
\begin{lemma}
\label{lemma:mapping:orbit}
  $\mathcal{F}$ is a bijective mapping between $\Omega(g)$ and the unit tangent vector space at $g^{-1}(p)$.
\end{lemma}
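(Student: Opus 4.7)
The plan is to leverage Lemma~\ref{lemma:mapping:so3}: we already know $\mathcal{F}$ is globally bijective on $SO(3)$, so injectivity on the orbit $\Omega(g)$ comes for free. The real content of Lemma~\ref{lemma:mapping:orbit} is that (i) the image of $\Omega(g)$ lies entirely in the tangent circle at the single point $g^{-1}(p)$, and (ii) that image is all of it. Both will follow from a single fact about the reference point $p=(0,1,0)$: since $p$ lies on the vertical axis, every $g_y \in R_Y$ fixes $p$, and the induced action on the tangent plane at $p$ (the $XZ$-plane) is a free, transitive action on unit vectors.

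\textbf{Step 1 (base point is constant on the orbit).} I would take an arbitrary $g' = g_y \circ g \in \Omega(g)$ and compute $\mathcal{F}(g')$'s base point. Using $(g_y \circ g)^{-1} = g^{-1} \circ g_y^{-1}$ and $g_y^{-1}(p) = p$, this base point equals $g^{-1}(p)$, independent of $g_y$. Hence $\mathcal{F}$ sends $\Omega(g)$ into the set of unit tangent vectors based at $g^{-1}(p)$.

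\textbf{Step 2 (surjectivity onto the tangent circle at $g^{-1}(p)$).} The tangent vector component is $g^{-1}\bigl(g_y^{-1}(t)\bigr)$. Since $t=(1,0,0)$ lies in the $XZ$-plane and $g_y^{-1}$ is a rotation about $Y$, as $g_y$ ranges over $R_Y$ the vector $g_y^{-1}(t)$ sweeps out every unit vector in the $XZ$-plane, i.e., every unit tangent vector at $p$. Applying the isometry $g^{-1}$ then yields every unit tangent vector at $g^{-1}(p)$, establishing surjectivity onto the target.

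\textbf{Step 3 (injectivity).} This is immediate from Lemma~\ref{lemma:mapping:so3}: distinct elements of $\Omega(g) \subset SO(3)$ give distinct tangent vectors under $\mathcal{F}$. Combining Steps~1--3 yields the claimed bijection.

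The only place to be careful is Step~2, where one must cleanly identify the tangent plane at $p$ with the $XZ$-plane and invoke transitivity of $R_Y$ on its unit vectors; there is no deep obstacle, just a verification that the $R_Y$-action restricts faithfully to the tangent circle at the fixed point $p$.
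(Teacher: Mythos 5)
Your proof is correct, and it reaches the same conclusion as the paper's but by a genuinely different route for the surjectivity half. Step 1 matches the paper's part (a) essentially verbatim: both compute $(g_y \circ g)^{-1}(p) = g^{-1}(g_y^{-1}(p)) = g^{-1}(p)$ using that $R_Y$ stabilizes $p$. For surjectivity, however, the paper argues \emph{backward}: given any unit tangent vector $s$ at $g^{-1}(p)$, it invokes the global bijection of Lemma~\ref{lemma:mapping:so3} to produce a unique $g_s \in SO(3)$ with $\mathcal{F}(g_s) = s$, then shows $g_s \circ g^{-1}$ fixes $p$ and hence lies in $R_Y$, so $g_s \in \Omega(g)$. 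You instead argue \emph{forward}: you observe that $R_Y$ acts freely and transitively on the unit tangent circle at $p$, so $g_y^{-1}(t)$ already sweeps out that entire circle, and the linear isometry $g^{-1}$ carries that circle bijectively onto the tangent circle at $g^{-1}(p)$. Both proofs ultimately rest on the same group-theoretic fact (that $R_Y$ is the stabilizer of $p$, and its action on the fiber is simply transitive), but you use it directly and constructively, while the paper uses the stabilizer characterization as a membership test after pulling back through Lemma~\ref{lemma:mapping:so3}. Your version is a touch more elementary and arguably more transparent about \emph{why} the tangent circle is hit; the paper's version more closely mirrors the structure of the fiber-bundle argument in the parallel proof. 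Your Step 3 (injectivity inherited by restriction) is exactly what the paper leaves implicit.
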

\begin{proof}
  (a) $\forall \tilde{g}=g_y \circ g \in \Omega(g)$, $\tilde{g}^{-1}$ sends $p$ to a fixed point $g^{-1}(p)$ on the sphere, and $t$ to a unit tangent vector at this point, because
  \begin{equation}
    \tilde{g}^{-1}(p) = (g^{-1} \circ g_y^{-1})(p) = g^{-1}(p).
  \end{equation}
  (b) Any unit tangent vector $s$ at $g^{-1}(p)$ corresponds to a rotation in $\Omega(g)$. This is because for any unit tangent vector $s$ at $g^{-1}(p)$, $s$ corresponds to a rotation $g_s \in SO(3)$ due to the bijection property of $\mathcal{F}$ that we just proved in Lemma~\ref{lemma:mapping:so3}, and $g_s^{-1}$ also sends $p$ to $g^{-1}(p)$. Now we take a look at the point $p$:
  \begin{equation*}
    p = (g_s \circ g_s^{-1})(p) = g_s(g_s^{-1}(p)) = g_s(g^{-1}(p)) = (g_s \circ g^{-1})(p),
  \end{equation*}
  $p$ is invariant under the rotation $g_s \circ g^{-1}$. We know that only $g_y \in R_Y$ can send north pole $p$ to itself, thus $g_y = g_s \circ g^{-1}$, and therefore $g_s = g_y \circ g \in \Omega(g)$.
\end{proof}

\begin{theorem}
Under the mapping $\mathcal{F}$, a solution of Equation~\ref{eq:app:setup} corresponds to a unit vector field on the sphere. 
\end{theorem}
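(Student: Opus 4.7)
The plan is to package Lemmas~\ref{lemma:mapping:so3} and~\ref{lemma:mapping:orbit} into a global statement: a solution $\mathcal{M}$ of Equation~\ref{eq:app:setup} chooses a distinguished representative in each orbit, and Lemma~\ref{lemma:mapping:orbit} converts each such representative into a unit tangent vector at a specific point of the sphere. Assembling these pointwise choices gives a unit tangent vector field on $S^2$.

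First I would make explicit the homeomorphism $H \cong S^2$ mentioned earlier by defining $\pi : H \to S^2$, $\Omega(g) \mapsto g^{-1}(p)$. This is well-defined on orbits because any representative $\tilde{g} = g_y \circ g$ satisfies
\begin{equation*}
    \tilde{g}^{-1}(p) = (g^{-1} \circ g_y^{-1})(p) = g^{-1}(p),
\end{equation*}
since $g_y \in R_Y$ fixes the north pole $p$. The argument used in the proof of Lemma~\ref{lemma:mapping:orbit} already shows that the fiber of $\pi$ over $g^{-1}(p)$ is exactly $\Omega(g)$, so $\pi$ is a bijection.

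Next, given any $\mathcal{M}$ satisfying $\Omega(\mathcal{M}(h)) = h$, I would define
\begin{equation*}
    V : S^2 \to TS^2, \qquad V(q) = \mathcal{F}\bigl(\mathcal{M}(\pi^{-1}(q))\bigr).
\end{equation*}
Because $\mathcal{M}(\pi^{-1}(q))$ lies in the orbit $\pi^{-1}(q)$ by construction, Lemma~\ref{lemma:mapping:orbit} guarantees that $\mathcal{F}(\mathcal{M}(\pi^{-1}(q)))$ is a unit tangent vector at $\mathcal{M}(\pi^{-1}(q))^{-1}(p) = q$. Hence $V(q) \in T_q S^2$ with $\|V(q)\| = 1$, i.e., $V$ is a unit tangent vector field on the sphere. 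The construction is reversible via $\mathcal{M}(h) = \mathcal{F}^{-1}(V(\pi(h)))$, which matches what is needed in the subsequent appeal to the Hairy Ball Theorem.

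The hard part: there is essentially no substantive obstacle here, since Lemma~\ref{lemma:mapping:orbit} has already done the conceptual work of identifying $\mathcal{F}(\Omega(g))$ with a tangent plane on $S^2$. The only care required is to verify well-definedness of $\pi$ on orbits and to keep the inverse conventions straight, remembering that we use $g^{-1}(p)$ (not $g(p)$) so that the base point of the tangent vector $\mathcal{F}(g)$ matches the point associated to the orbit containing $g$.
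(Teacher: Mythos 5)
Your proposal is correct and takes essentially the same approach as the paper: it uses the bijection from Lemma~\ref{lemma:mapping:orbit} to convert the orbit representatives chosen by $\mathcal{M}$ into unit tangent vectors at the corresponding points $g^{-1}(p)$, thereby obtaining a unit tangent vector field on $S^2$. You simply make the paper's terse ``is equivalent to selecting one unit tangent vector for each point'' step explicit by introducing $\pi$ and writing $V(q) = \mathcal{F}\bigl(\mathcal{M}(\pi^{-1}(q))\bigr)$, which is a welcome clarification but not a different argument.
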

\begin{proof}
Since there is a bijection between $\Omega(g)$ and unit tangent vectors at $g^{-1}(p)$ according to Lemma~\ref{lemma:mapping:orbit}, selecting a representative element from each orbit $\Omega(g) \in H$, as the mapping $\mathcal{M}$ in Equation~\ref{eq:app:setup} does, is equivalent to selecting one unit tangent vector for each point $g^{-1}(p)$ on the sphere. Therefore $\mathcal{M}$ corresponds to a unit tangent vector field on the sphere. Hereafter we denote this corresponding vector field as $V_{\mathcal{M}}$.
\end{proof}

This is exactly what the Hairy Ball Theorem concerns~\cite{Eisenberg:1979:HBT}: you simply cannot comb a sphere without creating at least one cowlick somewhere. That is, there is no singularity-free unit tangent vector field on a sphere, which is relatively well-known in the graphics community~\cite{Fisher07}. Therefore, there is no singularity-free mapping that satisfies Equation~\ref{eq:app:setup}.
\section{Robust Direction-invariant Features}
\label{sec:derivation}

\begin{algorithm}[t]
\label{alg:mdc}
  \SetAlgoLined
  \KwIn{Root orientation frames $\{g_t\},~t=1, \dots, T$}
  \KwOut{The motion direction vector $r$}
 
  $p \leftarrow (0, 1, 0)$\;
  \For{$t \leftarrow 1$ \KwTo $T$}{
    $p_t = g^{-1}_t(p)$\;
   }
  $\{r_i\}$=UniformlySampleOnSphere($N$), $~ i = 1, \dots, N$\;
  \For{$i \leftarrow 1$ \KwTo $N$}{
    $d_i \leftarrow +\infty$\;
    \For{$t \leftarrow 1$ \KwTo $T$}{
    $d^{+}_{it}=$ComputeGeodesicDistance($r_i,p_t$)\;
    $d^{-}_{it}=$ComputeGeodesicDistance($-r_i,p_t$)\;
    $d_i = \min\{d_i,~d^{+}_{it},~d^{-}_{it}\}$\;
    }
  }
  $j \leftarrow \argmax_{i\in 1, \dots , N} d_i$\;
  \Return $r_j$
\caption{Motion Direction Computation}
\end{algorithm}

In order to map all motion features into a direction invariant set, we need to first calculate the $Y$-rotation of the root frame $g$, so that the transformed features become invariant to the direction of motion. We formulate one of the typical algorithms used in the animation community for this purpose in Section~\ref{sec:dif}, but generalize it by replacing the facing direction with motion direction. The motion direction  $r$ can be either manually selected for simple rotational tasks, or automatically computed as described in Section~\ref{sec:mdc}.

\subsection{Motion Direction Computation}
\label{sec:mdc}

Following the proof in Section~\ref{sec:hairy_ball}, we derive an algorithm as shown in Algorithm~\ref{alg:mdc} to automatically choose the best motion direction $r$ for a given root orientation trajectory $\{g_{t}\}$. The idea is to maximize the geodesic distance from $p=(0,1,0)$ to $g_t(r)$ and $g_t(-r)$ to stay away from singularities for the full course of the motion, which is equivalent to maximizing the geodesic distance from $g_t^{-1}(p)$ to $r$ and $-r$. We will also use the trajectory $g_t^{-1}(p)$ and the vector field $V_{\mathcal{M}}$ in our visualizations in Figure~\ref{fig:vector_field} and \ref{fig:break_dance}.

\subsection{Direction-invariant Features}
\label{sec:dif}

Assuming the current root frame orientation represented in the global world frame is $g$. Its $Y$-rotation $g_y$ induced by motion direction $r$ can be derived as follows as illustrated in Figure~\ref{fig:scheme}. First, rotate the motion direction $r$ by $g$ to get $r' = g(r)$. Second, calculate the dihedral angle $\theta$ between the two planes with normals $Y \times r$ and $Y \times r'$. Then $g_y$ equals $(Y,\theta)$, which is a rotation that rotates $r'$ onto the plane $Y \times r$ and parameterized in the axis-angle representation. Finally, $g_0 = g_y \circ g$, and all orientations in orbit $\Omega(g)$ are mapped to the same representative element $g_0$. The $V_{\mathcal{M}}$ that corresponds to the mapping $\mathcal{M}$ induced from a motion direction $r$ can then be calculated as $g_0^{-1}(t)$ for point $g^{-1}(p)$ on the sphere. Note that we only
need $g_y$ and not $g_0$ to transform a motion feature $m$ into its direction invariant version $g_y(m)$. In our experiments in Section~\ref{sec:results}, motion direction $r$ is chosen as follows: by default we choose the original facing direction $X$ axis; if it does not work, we choose the lateral $Z$ axis; when both fail, we compute another axis as described in Algorithm ~\ref{alg:mdc}.

\begin{figure}
    \centering
    \includegraphics{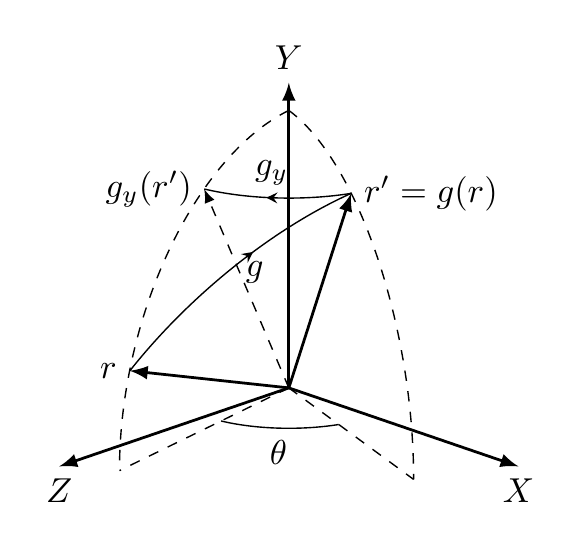}
    \vspace{-15pt}
    \caption{Calculation of $g_y$ using a chosen motion direction $r$.}
    \vspace{-15pt}
    \label{fig:scheme}
\end{figure}

As we have proved in Section~\ref{sec:hairy_ball}, one direction-invariant mapping scheme $\mathcal{M}$ corresponds to one unique unit vector field $V_{\mathcal{M}}$ on the sphere. We thus visualize two mapping schemes in Figure~\ref{fig:vector_field}. Figure~\ref{fig:vector_field} (a) is induced by choosing the conventional facing direction (root $X$ axis) as the motion direction. Figure~\ref{fig:vector_field} (b) is induced by choosing the lateral direction (root $Z$ axis) as the motion direction. We also visualize the trajectory of root frames $\{g_t\}$ by plotting $g_t^{-1}(p)$ on the unit sphere. The singularity points are calculated by Helmholtz decomposition~\cite{Crane:2013:DGP} from the vector field $V_{\mathcal{M}}$. As we can see, different mapping schemes put singularity points onto different locations on the sphere, while the trajectories of root frame orientations remain the same. Thus we can strategically avoid singularities by choosing an appropriate motion direction for a specific motion task. 

\begin{figure}
    \centering
    \includegraphics[width=\linewidth]{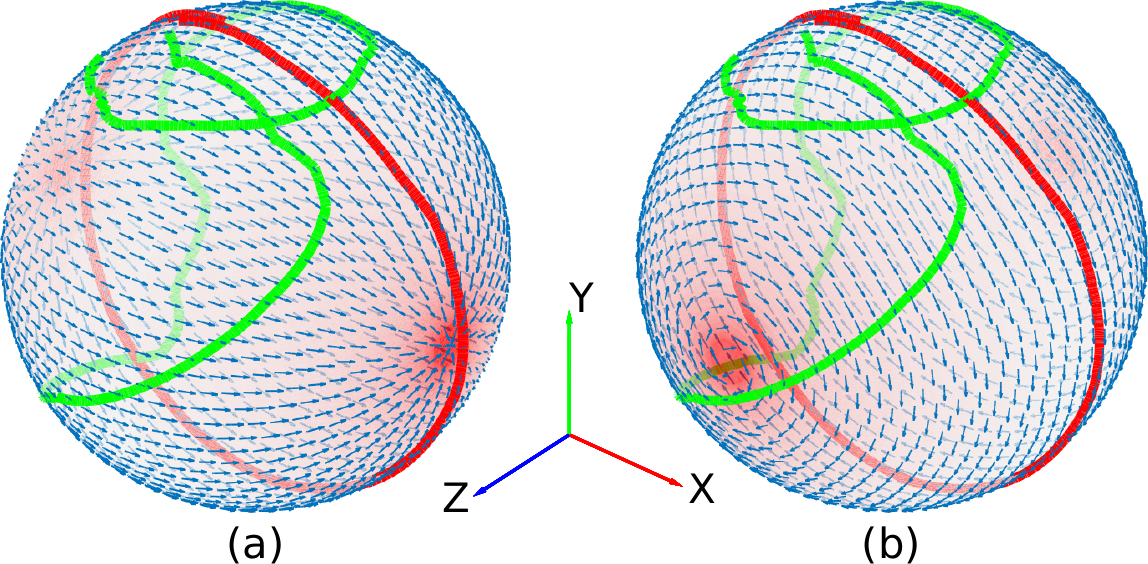}
    \caption{Visualization of two mapping schemes: (a) Facing direction (root $X$ axis) as the motion direction. (b) Lateral direction (root $Z$ axis) as the motion direction. The curves on the sphere are the mapped trajectories of the root orientations in a backflip (red) and a jump twist (green). They pass the singularities in only one of the mapping schemes but not the other.}
    \vspace{-15pt}
    \label{fig:vector_field}
\end{figure}
\section{Results}
\label{sec:results}

We perform comparative studies on motion synthesis tasks using one kinematic method~\cite{zhou2018autoconditioned} and one physics-based approach~\cite{Peng:2018:DeepMimic}. In both studies, we compare performance using global features (GF), direction-invariant features calculated from the facing direction (DIF), and direction-invariant features calculated from the motion direction (DIM). Note that DIF is a special case of DIM. So we first compare GF vs. DIF, and if DIF performs better than GF, DIM will too. Then we compare DIF vs. DIM for those motions that DIF is problematic. In the paper, we report the performance in terms of motion quality and learning speed. We encourage the readers to also watch the supplemental video demo to better judge the differences visually. We use the same color scheme for all the figures and video clips: the green character shows the ground truth captured reference motion; the red character visualizes results from GF; the blue character shows motions synthesized from DIF; and the yellow character presents motions from DIM. 


\subsection{Kinematic Character Animation}
\label{sec:kinresults}
We chose acRNN~\cite{zhou2018autoconditioned} as the kinematic motion synthesis method to perform the comparative study, as it is the only one that can synthesize long-term and complex motions, to the best of our knowledge. Trained acRNN (auto-conditioned Recurrent Neural Net) can generate long motions of style similar to those in the training set given an input short motion clip. We use the code released by the original authors and all training settings and parameters remain the same, except for the input motion features which we use three different sets: GF, DIF, and DIM. The features used in the original paper, including the root velocity and relative positions of internal joints wrt. the root, are represented in the global world frame and thus used as the GF set. We further convert all features to the DIF and DIM sets, with an addition of the root's angular velocity around the $Y$ axis to generate the full motion trajectory by integration. All models with different input feature sets are trained for the same amount of time. We use the publicly available CMU motion capture database and downsample the selected motions to 60~Hz as our training data.

\begin{figure}[]
   \centering
   \includegraphics[width=1\linewidth]{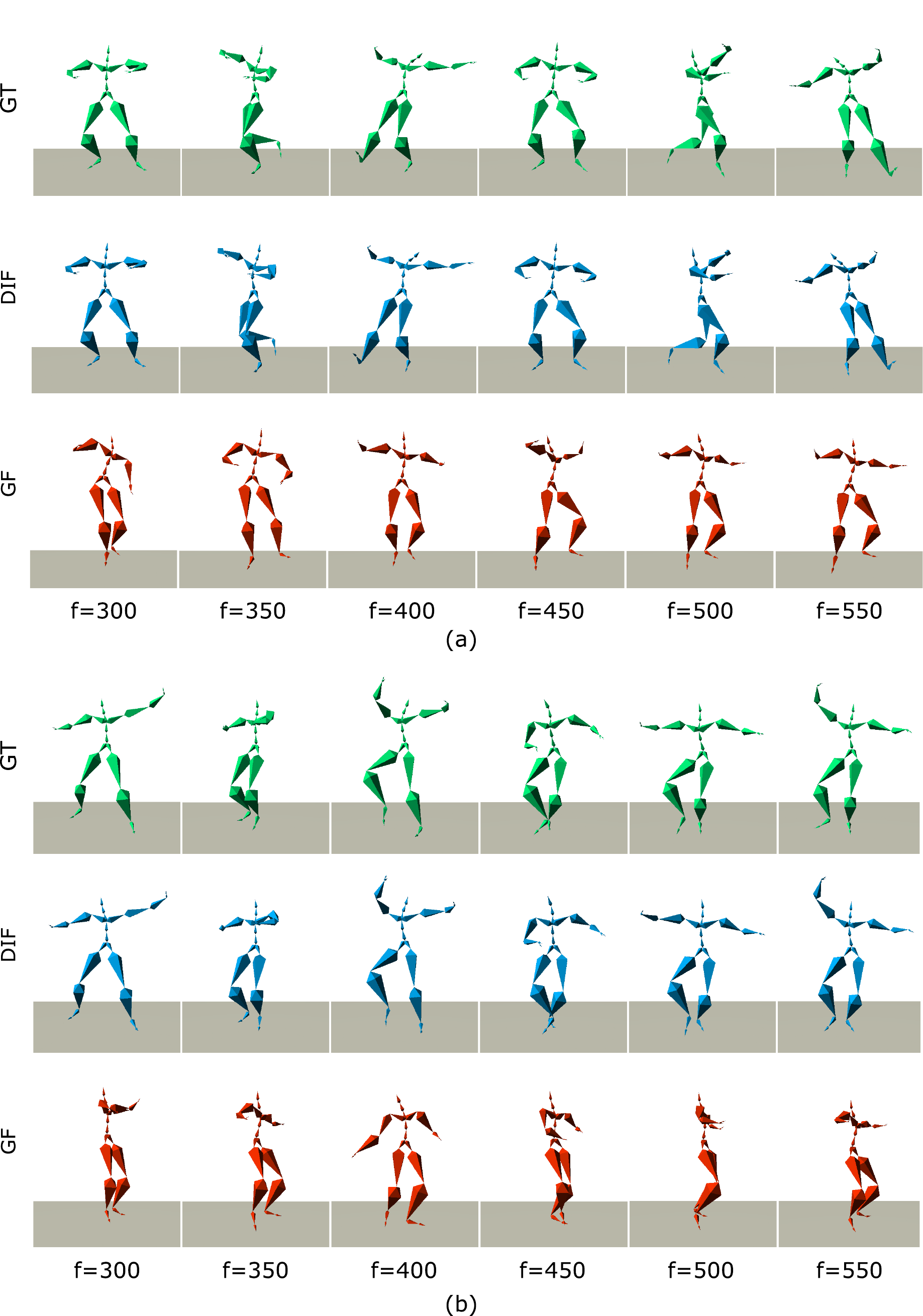}
   \caption{\label{fig:AcLSTM-global}
   Comparison of motion synthesis models trained with GF vs. DIF. f is the frame index. The green character on the top row shows the ground truth motion capture data (GT). The blue character in the middle row shows poses generated by the DIF-trained model. The red character on the bottom row shows poses generated by the GF-trained model.}
   \vspace{-15pt}
\end{figure}

\begin{figure}[]
   \centering
   \includegraphics[width=1\linewidth]{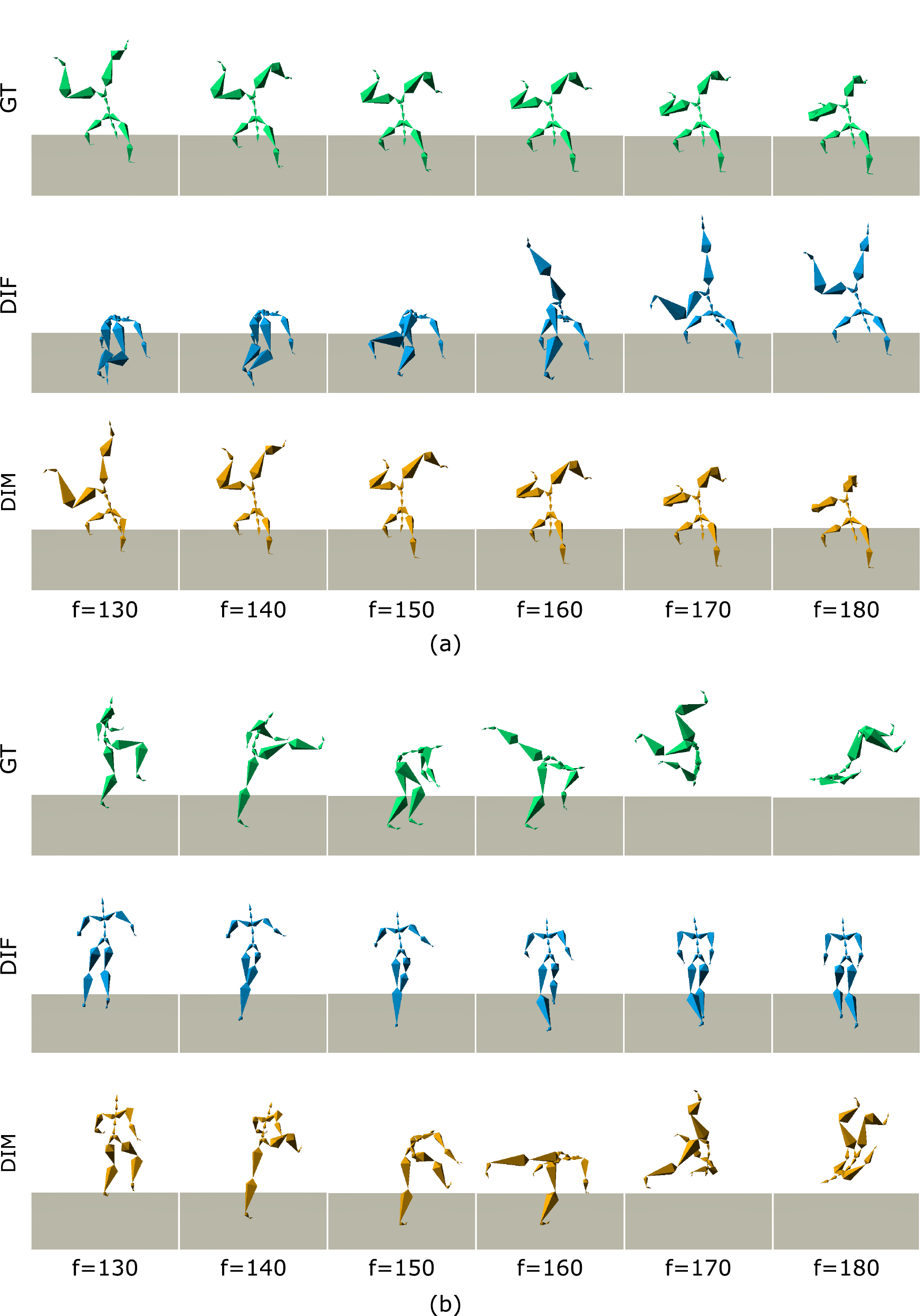}
   \caption{\label{fig:AcLSTM-xaixs} Comparison of motion synthesis models trained with DIF vs. DIM. The green character on the top row shows the ground truth motion capture data (GT). The blue character in the middle row shows poses generated by the DIF-trained model. The yellow character on the bottom row shows motions generated by the DIM-trained model.}
   \vspace{-15pt}
\end{figure}

\subsubsection{GF vs. DIF}
\label{sec:kin:global}
To compare models with GF vs. DIF, we choose the original Indian Dance Motion Sets used in the original paper~\cite{zhou2018autoconditioned}. This dataset contains roughly 421 seconds of training motion, and we trained both models for 50 hours. Figure~\ref{fig:AcLSTM-global} shows several snapshots taken from synthesized motions generated by the trained models. The model trained with rotation-invariant features DIF can generate significantly longer motions in higher qualities, in contrast to motions generated by GF, which tend to become unrealistic after a few seconds. The rotation-invariant feature representation greatly reduces the complexity of the models that need to be learned.
   
\subsubsection{DIF vs. DIM}
\label{sec:kin:zaxis}
To compare models learned with DIF vs. DIM features, we prepared a dataset of about 203 seconds which consists of various rotational behaviors such as break dance and acrobatics motions. We trained both models for 40 hours. Figure \ref{fig:AcLSTM-xaixs} shows snapshots of synthesized motions. The model trained with DIF fails to generate some rotational motions, while the DIM model using the root $Z$ axis as the motion direction can successfully reproduce these behaviors.

\begin{figure}[]
   \centering
   \includegraphics[width=1\linewidth]{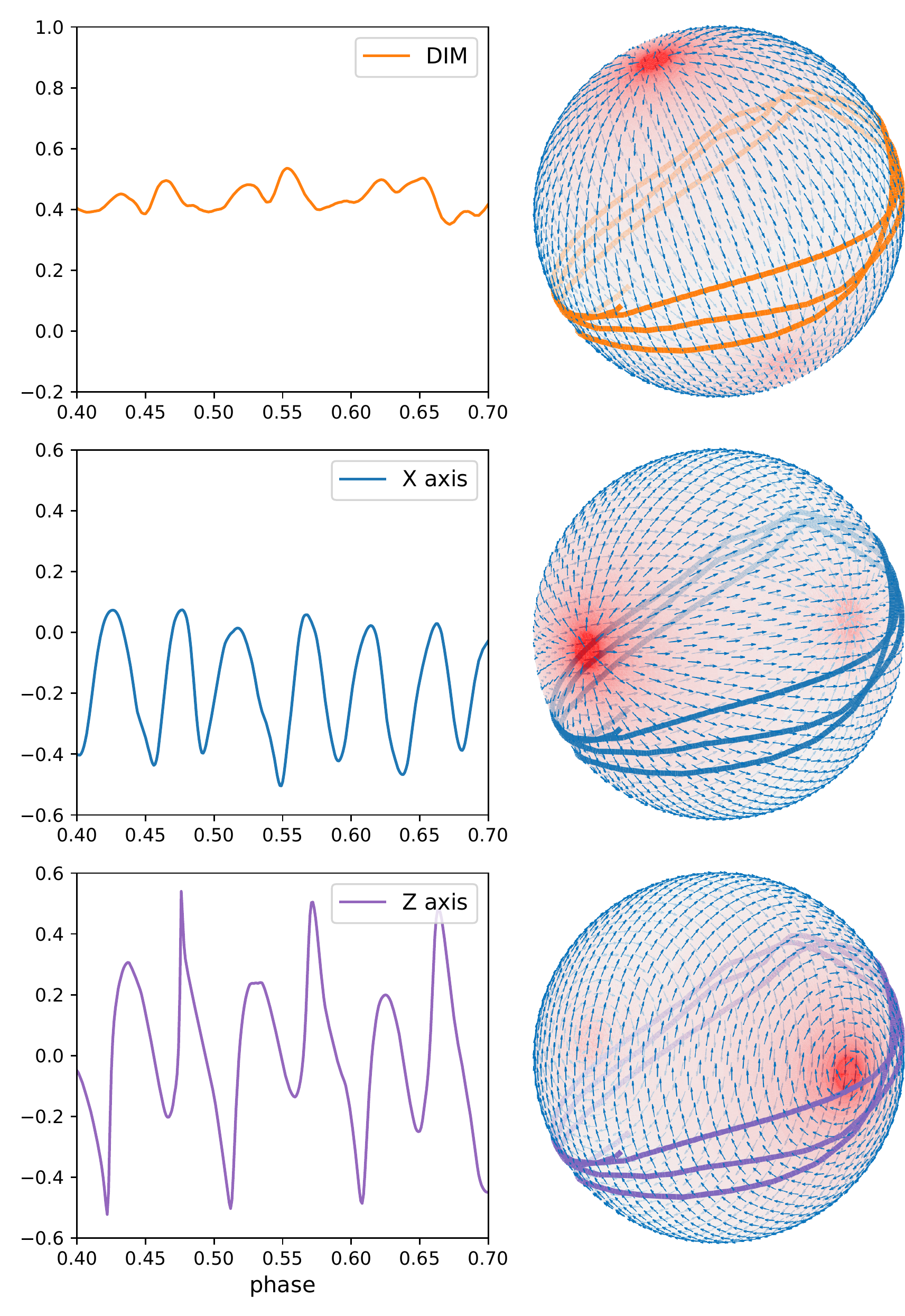}
   \caption{\label{fig:break_dance}
   Comparison of motion features derived from DIM, DIF (the $X$ axis) and the $Z$ axis for a spin dance motion. The specific feature visualized here is the $z$ coordinate of the relative position from the chest to the root. phase is the current time divided by the total length of the motion.}
      \vspace{-10pt}
\end{figure}

We further note that when the facing direction root $X$ axis does not work as a valid motion direction, the next direction we try is the lateral direction root $Z$ axis. However, there are rotational behaviors that neither direction qualifies as a good motion direction. Figure~\ref{fig:break_dance} shows one example with a spin dance motion. Features derived from both the $X$ axis and the $Z$ axis are bad. We thus use Algorithm~\ref{alg:mdc} to automatically compute a robust motion direction for this motion:  $(-\sin(30\degree),\cos(30\degree),0)$. As shown at the top of  Figure~\ref{fig:break_dance}, features derived from this axis are much smoother. In the supplementary video, we also visualize all the transformed positional features, and those derived from the $X$ axis and $Z$ axis will cause the character to rotate wildly. 

\subsection{Physics-based Character Animation}

We chose DeepMimic~\cite{Peng:2018:DeepMimic} as the physics-based motion synthesis method to perform the comparative study, as it produces the best results in terms of motion quality and motion variety. DeepMimic combines a motion-imitation objective with a task objective, and train characters using DRL to, say, walk in a
desired direction. The input features are of high dimensions, and we only alter the state that describes the configuration of the character's body, including relative positions of each link with respect
to the root, their rotations expressed in
quaternions, and their linear and angular velocities. We refer interested readers to the original paper for more details~~\cite{Peng:2018:DeepMimic}. We run the code released by the original authors on a machine with an 18-core i9-7980XE CPU, and can draw about five million samples per hour. We keep all other training parameters the same. The training motion capture clips are either from the data downloaded together with the code, or selected from the CMU mocap database.

\begin{figure}[t]
    \centering
    \includegraphics[width=\linewidth]{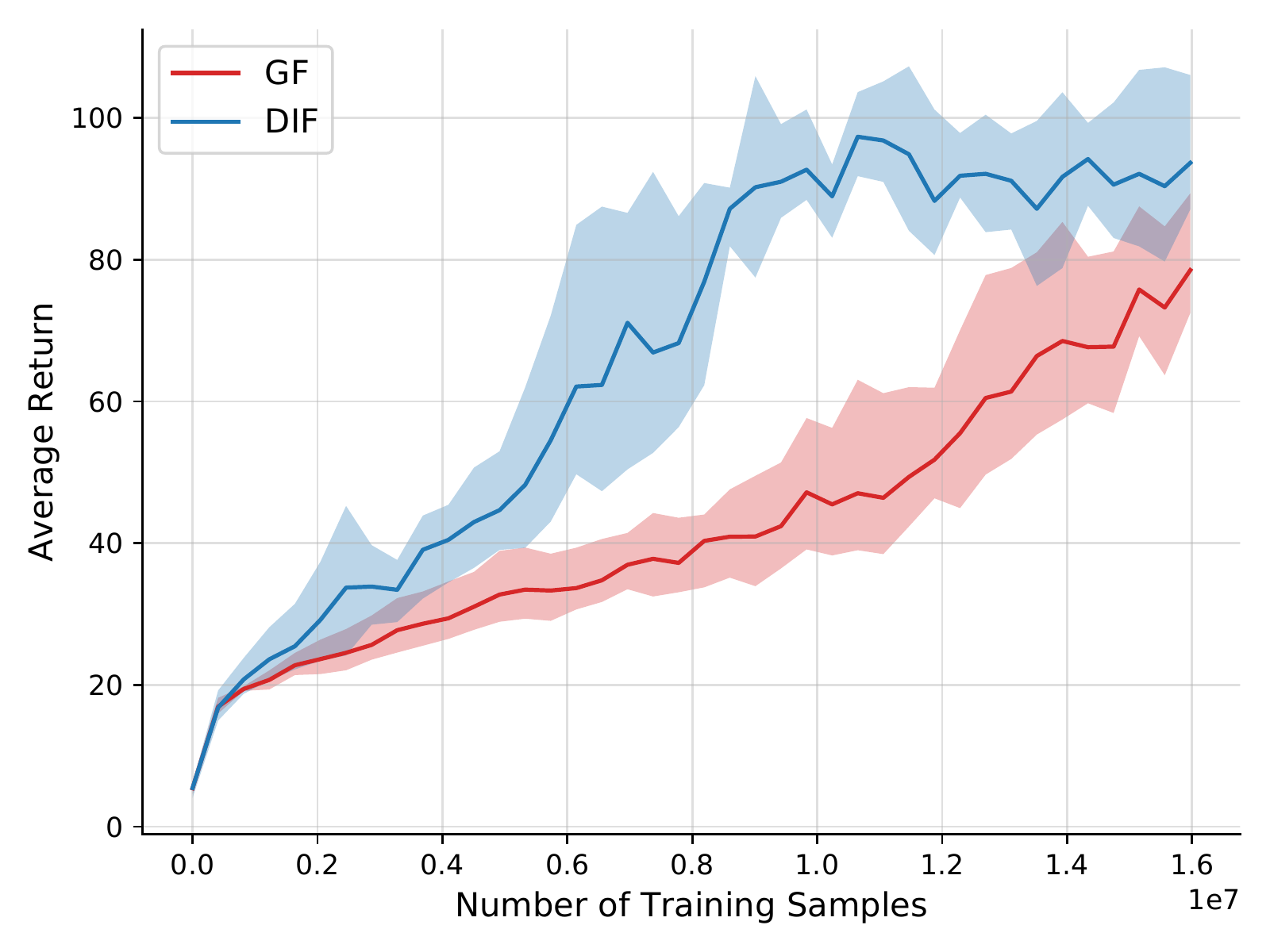}
    \caption{Training curves of a walk. The control policy trained using GF cannot walk without falling while the policy trained using DIF can walk indefinitely.}
    \label{fig:curve:walk}
\end{figure}

\begin{figure}[t]
    \centering
    \includegraphics[width=\linewidth]{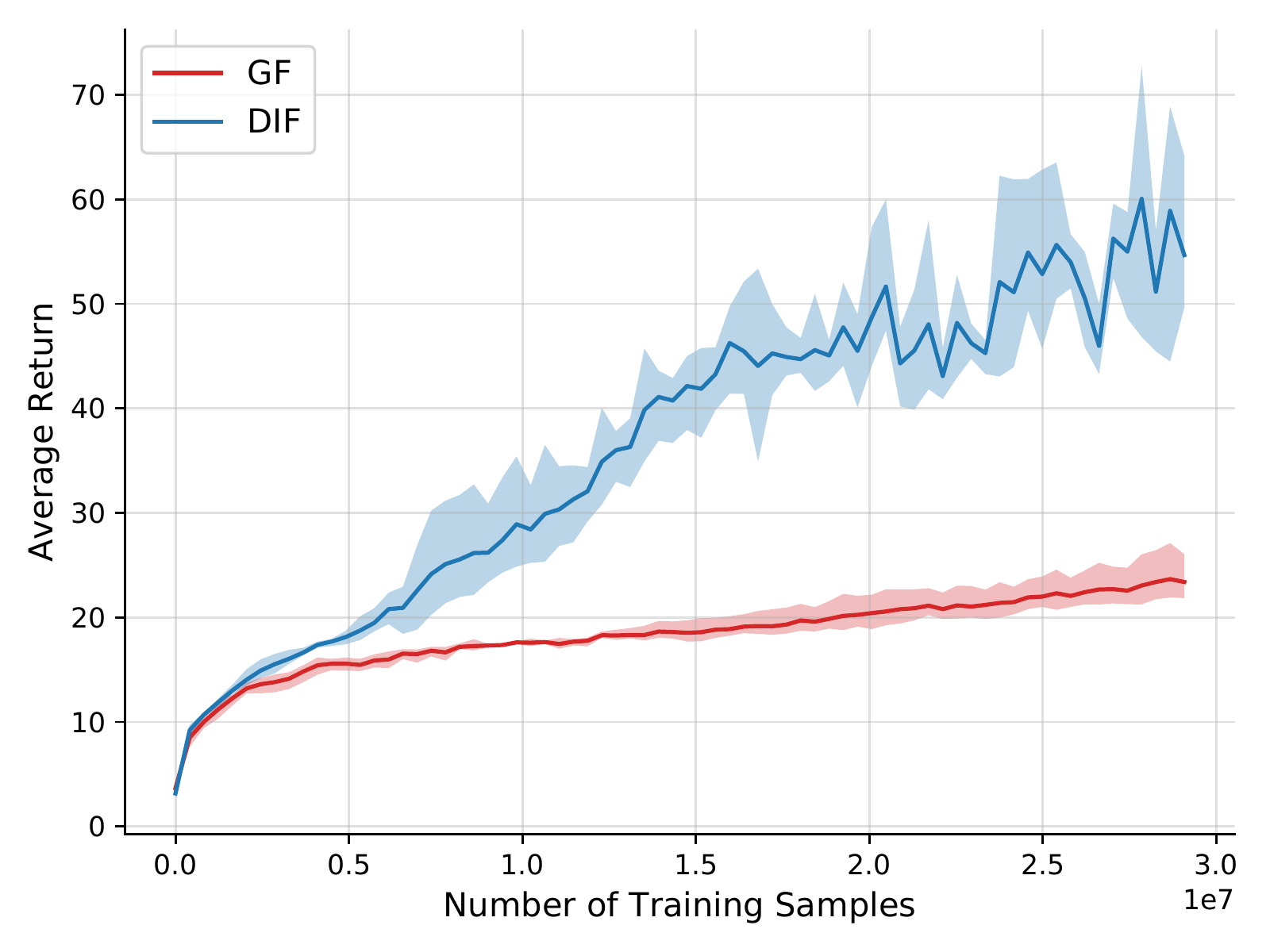}
    \caption{Training curves of a backflip. The control policy trained using GF cannot reproduce the backflip while the policy trained using DIF can backflip indefinitely.}
    \label{fig:curve:backflip}
    \vspace{-10pt}
\end{figure}

\subsubsection{GF vs. DIF}
\label{sec:dynamic:global}
To compare controllers learned from GF vs. DIF, we used two cyclic motion tasks that slowly turn in the $XZ$ plane: a walk ($\sim$8.8s) and a backflip ($\sim$10.6s). The original DeepMimic code used features in the facing frame, i.e., the DIF scheme. We further tested features in the global frame GF. As there is some randomness in deep reinforcement learning, we trained each task three times with different random seeds, and evaluated the policy every 100th iteration as in DeepMimic. Figure~\ref{fig:curve:walk} and \ref{fig:curve:backflip} show the training curves for the walk and backflip, respectively. Solid curves correspond to mean returns and shaded regions correspond to minimal and maximal returns among the three trials. Using GF largely slows down the training and results in failure in both tasks, i.e., the character loses balance and falls to the ground. In contrast, the DIF features facilitated learning of direction-invariant locomotion controllers. We also encourage readers to refer to our supplement video to see the animated results. Note that there is little difference in training of a short straight walk though, for example. In such cases the controller does not need to be invariant to different facing directions, and both schemes can be successful.

\subsubsection{DIF vs. DIM}
\label{sec:dynamic:zaxis}
To compare DIF vs. DIM, we tested rotational motions such as crawl, roll and backflip. We first plot and compare the DIF and DIM features for a backflip in Figure~\ref{fig:rep:backflip}. There are severe discontinuities in the motion features in the DIF scheme. We also trained DRL policies with DIF and DIM features. The DIM motion direction is chosen to be the root $Z$ axis in all test cases. Learning with  DIM features results in faster training and better motion quality as shown in Figure~\ref{fig:crawl}. In these tasks, the mapped root frame crossed the singularities on the unit sphere in the DIF scheme (e.g., Figure~\ref{fig:scheme}), while the DIM scheme successfully avoided them. Note that for certain cases, such as backflip, learning with DIF and DIM features presented little difference. We conjecture that the controller is easy to be successful at the moment of singularity crossing, due to the inertia of motion in the midair.

\begin{figure}
    \centering
    \includegraphics[width=\linewidth]{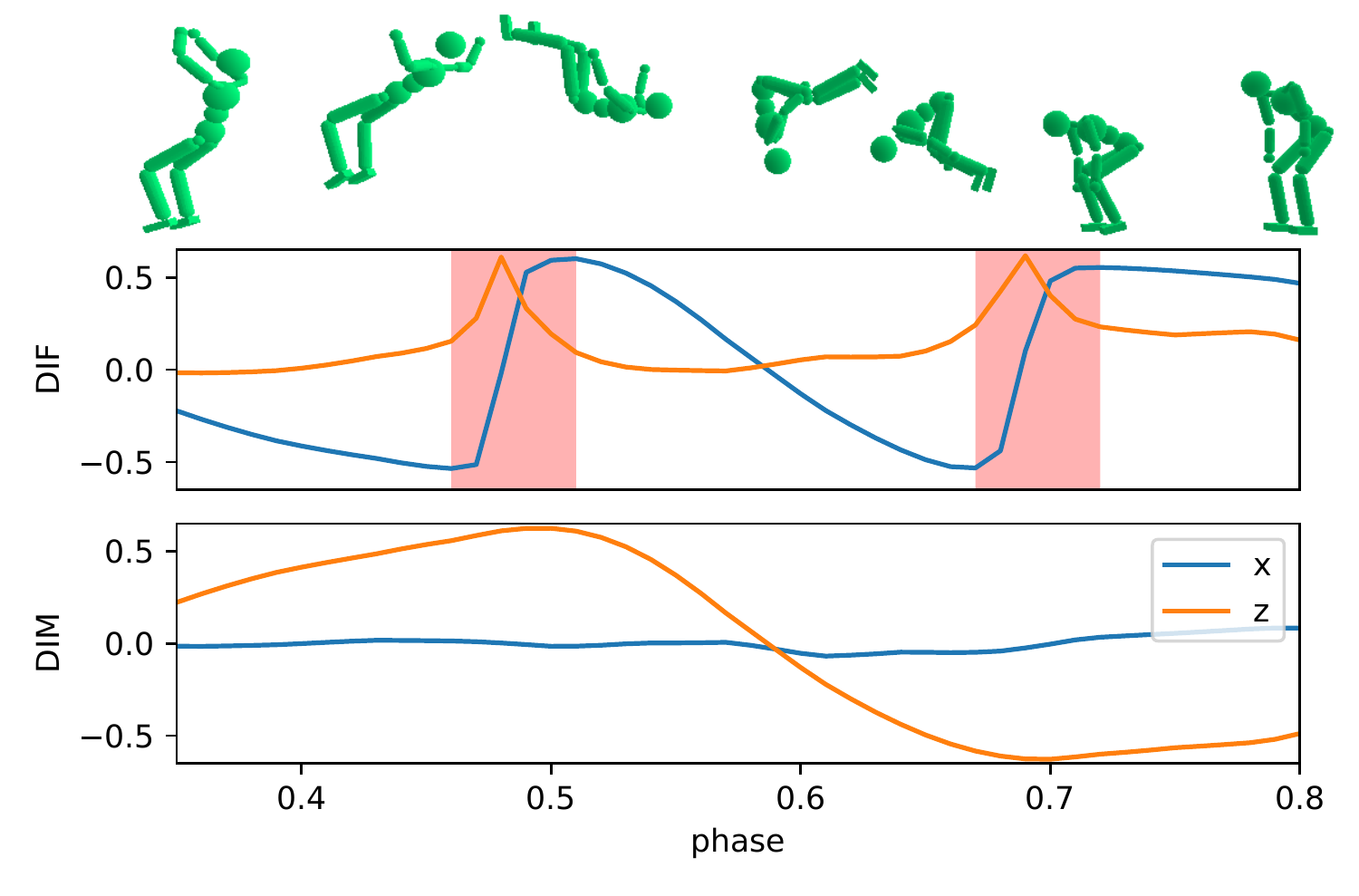}
    \caption{Comparison of DIF and DIM features for a backflip. The $x$ and $z$ coordinates of the relative position of the neck to the root are visualized. There are two moments where the DIF features are discontinuous while the DIM features are always smooth.}
    \vspace{-10pt}
    \label{fig:rep:backflip}
\end{figure}




\begin{figure}
    \centering
    \includegraphics[width=\linewidth]{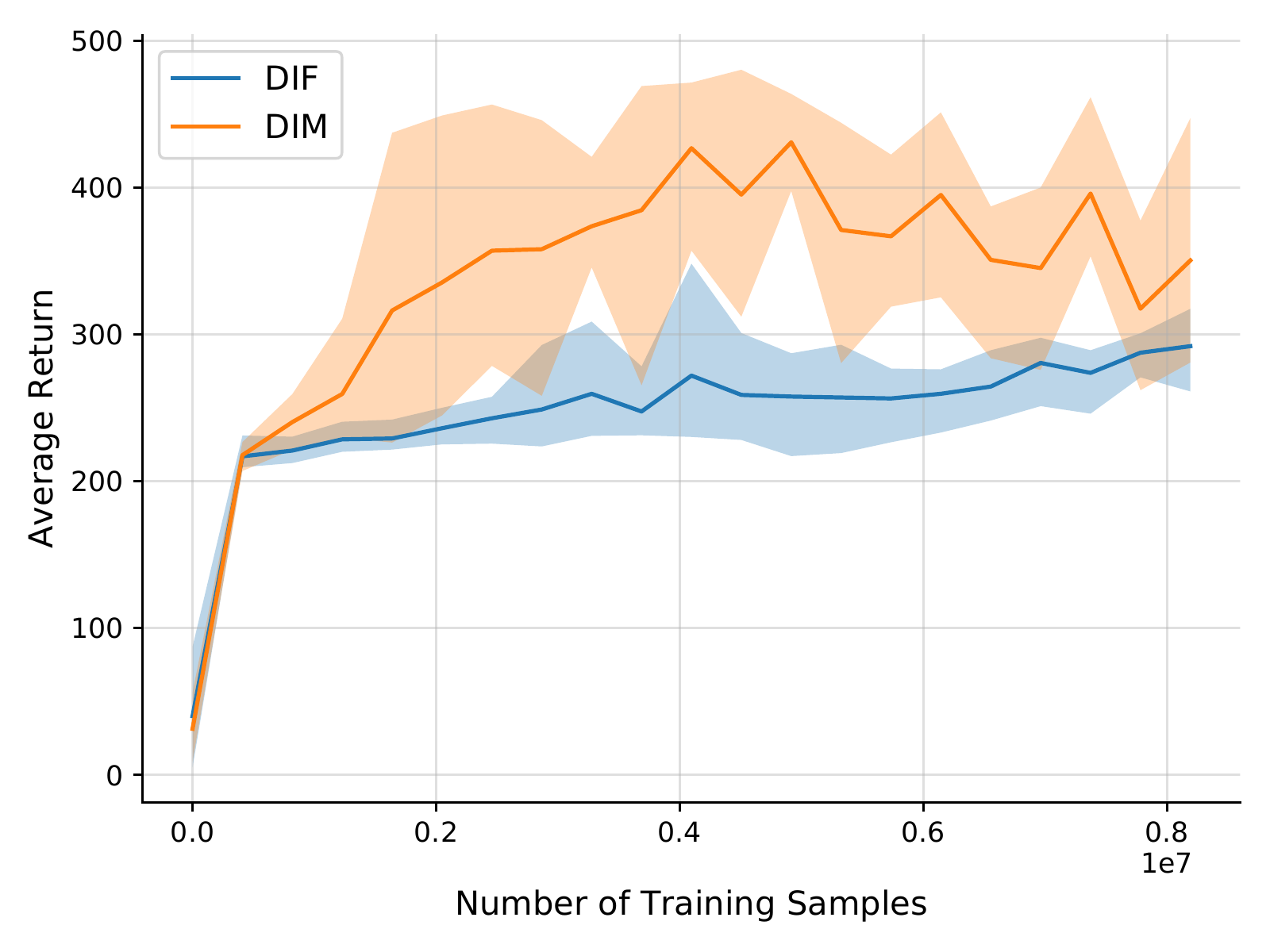}
     \includegraphics[width=0.9\linewidth]{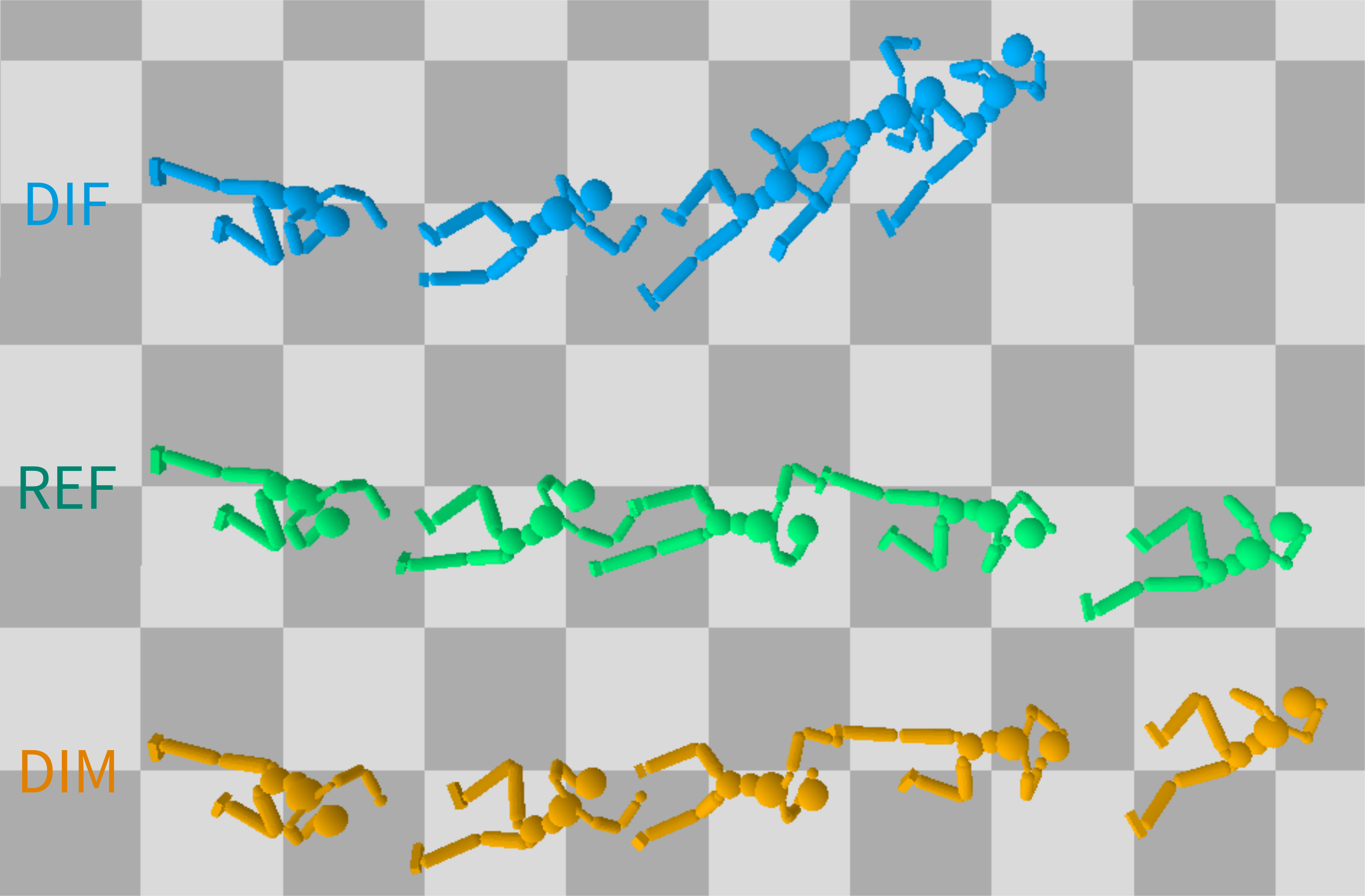}
    \caption{Top: training curves of a crawl. Training using DIM features gain higher reward faster. Bottom: the policies after 2000 training iterations ($\sim$ eight million samples) produce motions of different quality. The policy trained from DIF features cannot crawl in a straight line and may get stuck at certain moments.}
    \label{fig:crawl}
       \vspace{-15pt}
\end{figure}


\section{Discussion}

To conclude, we have proved that in general there is no singularity-free scheme to achieve direction invariance in character animation. However, a properly-chosen motion direction and its derived features can stay far away from singularities for specific motions and work well in practice. The traditionally-chosen facing direction has worked so far because locomotion tasks, which have been the focus of character animation research, generally only involve rotations around the gravitational direction. Our robust direction invariant features derived from motion directions will enable better results for modelling and synthesis of more versatile behaviors such as gymnastic and dance motions. A good motion direction can be either selected manually, or computed automatically for complex rotational behaviors that contain rotations around multiple axes. 

We note that even though kinematic character controllers are usually facing direction invariant today, most physics-based controllers are not. For example in DeepMimic, the learned controller tries to track the absolute root orientation in the reference motion. If the reference character faces north and walks north, the simulated character will always try to face north. Even if the simulated character is initially positioned facing south, it will still try to turn north and walk north. This will result in either a fall or an awkward sharp turn to the north. If this global orientation constraint could be lifted in the DRL learning, we would expect greater performance gaps among GF, DIF and DIM schemes. Optimization-based control methods such as trajectory optimization also have mainly worked on direction-specific controllers. We expect that our DIM features will achieve performance gains over DIF as well for optimization-based methods, as they usually need to calculate derivatives, and DIM features are simply smoother and more well-behaved than DIF. Lastly, we note that although transformed rotations in $SO(3)$ are smooth using our DIM scheme, widely used rotation representations such as quaternions can still have discontinuities and pose difficulty for learning. We recommend using continuous rotation representations such as the ones recently proposed in \cite{Zhou_2019_CVPR}.

In the future, we wish to examine how to achieve direction invariance for end-to-end computer vision systems such as for human activity recognition. In such systems, only pixel data and/or 2D positions are available as input features, and the DNN needs to learn direction invariance somehow. 3D direction-invariant features will definitely accelerate this process, but it is not obvious how to derive such features from 2D inputs without another layer of 2D to 3D pose lifting~\cite{Tome17}.


\paragraph*{Acknowledgements}
We would like to thank Runjie Hu for insightful discussions during the development of our proofs in Section 3, and Yiying Tong for proof reading these proofs. This project is partially supported by NSERC Discovery Grants Program RGPIN-06797 and RGPAS-522723. Li-Ke Ma is supported by a CSC scholarship from the China Scholarship Council.

\bibliographystyle{eg-alpha-doi} 
\bibliography{hairyball.bib}


\end{document}